\newtheorem{definition}{Definition}
\newtheorem{remark}{\textit{Remark}}
\newtheorem{theorem}{\bf{Theorem}}
\newtheorem{proposition}{\bf{Proposition}}
\newtheorem{corollary}{\bf{Corollary}}
\begin{document}

\title{A Message Passing Strategy for Decentralized Connectivity Maintenance in Agent Removal}

%
%

\author{Derya~Aksaray,~
        A.~Yasin~Yaz{\i}c{\i}o\u{g}lu,~
        Eric~Feron, and~Dimitri~N.~Mavris~
\thanks{D.~Aksaray,~E.~Feron,~and D.N.~Mavris are with the Department of Aerospace Engineering, Georgia Institute of Technology, Atlanta,
GA, 30332 USA (e-mails: daksaray@gatech.edu, eric.feron@aerospace.gatech.edu, dimitri.mavris@aerospace@gatech.edu).}
\thanks{A.Y.~Yaz{\i}c{\i}o\u{g}lu is with the Department of Electrical and Computer Engineering, Georgia Institute of Technology, Atlanta,
GA, 30332 USA (e-mail: yasin@gatech.edu).}
}
\bibliographystyle{IEEEtran}
\maketitle

\begin{abstract}
In a multi-agent system, agents coordinate to achieve global tasks through local communications. Coordination usually requires sufficient information flow, which is usually depicted by the connectivity of the communication network. In a networked system, removal of some agents may cause a disconnection. In order to maintain connectivity in agent removal, one can design a robust network topology that tolerates a finite number of agent losses, and/or develop a control strategy that recovers connectivity. This paper proposes a decentralized control scheme based on a sequence of replacements, each of which occurs between an agent and one of its immediate neighbors. The replacements always end with an agent, whose relocation does not cause a disconnection. We show that such an agent can be reached by a local rule utilizing only some local information available in agents' immediate neighborhoods. As such, the proposed message passing strategy guarantees the connectivity maintenance in arbitrary agent removal. Furthermore, we significantly improve the optimality of the proposed scheme by incorporating $\delta$-criticality (i.e. the criticality of an agent in its $\delta$-neighborhood).
\end{abstract}
\maketitle

\section{Introduction} \label{intro}
OVER the last decade, advances in networking and computing technologies along with new manufacturing techniques have created a new paradigm shift towards multi-agent systems (MAS) in engineering applications. A MAS involves a set of agents. In most cases, some information flow among the agents via local communications. Recently, there is a significant interest in using multiple unmanned vehicles over large areas for target tracking (e.g. \cite{martinez2006, olfatisaber2007}), environmental monitoring (e.g. \cite{leonard2007, white2008}), persistent surveillance (e.g. \cite{beard2006, bethke2009}), formation and coverage (e.g. \cite{fax2004, li2010}), and several others. In these missions, it is often desirable to have a connected communication network. However, an agent removal may cause a disconnection. 

Multi-agent health management problems emphasize that there is an inherent possibility of agent removal in a MAS due to failure or refueling (e.g. \cite{redding2011, valenti2007,valenti2007_2}). In such cases, connectivity maintenance can be achieved through \emph{proactive} (e.g. \cite{motevallian2011, summers2009, yazicioglu2013}) or \emph{reactive} (e.g. \cite{akkaya2008, abbasi2013}) approaches. In proactive approaches, a robust network topology is designed a priori to mission such that the network can tolerate a finite number of agent losses. In reactive approaches, a control strategy is developed such that the network self-repairs itself in the case of agent removal. Note that relying only on proactive approaches, by adding redundant edges to the graph for strong connectivity, can be impractical in missions that have limited resources for communication and a possibility of losing large number of agents.

Recently, there is a growing interest in the recovery processes for connectivity maintenance of networked systems. These processes can be characterized as \textit{centralized} or \textit{decentralized} with respect to the available information of the overall or local network structure, respectively. In a large scale system, the availability of the overall network is not a realistic assumption. Therefore, a decentralized strategy is preferable over a centralized one due to practicality and scalability concerns. 

This paper introduces a decentralized recovery scheme that is applicable to any scale of networked systems. The decentralized scheme is based on a sequence of replacements occurring between an agent and one of its neighbors. Each agent is assumed to have a unique ID, which is known to its immediate neighbors. Before an agent leaves the group (e.g. due to reaching to a critical power threshold), it creates a message with its individual ID and passes it to one of its neighbors as a request for that neighbor to replace itself. Whenever an agent receives a message, it adds its own ID to the bottom of the message and sends it to one of its other neighbors, whose ID is not included in the message, i.e. a neighbor who has not received the message earlier. Accordingly, some consecutive replacements are executed. A message passing strategy is similar to token-based techniques used in various algorithms, such as in \cite{petcu2006} and \cite{corke2006}, to record the nodes visited by the token. In this paper, we use such a token-based idea, and we show that the resulting replacement sequences guarantee connectivity in any agent removal. The proposed strategy is a decentralized scheme that leverages only local information (e.g. agent IDs) for the connectivity maintenance, hence it is applicable to networks of any scale.   

The organization of this paper is as follows: Section~\ref{literature} presents some related work in the literature. Section~\ref{math_prelim} depicts some mathematical preliminaries. Section~\ref{problem} motivates and defines the problem. Sections~\ref{replacement_strategy} and \ref{deltaMPS} elaborate on the replacement control problem and introduce the message passing strategies. Section~\ref{analysis} presents the Monte Carlo simulations for the analysis of proposed control scheme and its comparison to an optimal (minimum number of replacements) centralized method. Finally, Section~\ref{conclusion} concludes the paper.

\section{Related Work} \label{literature}
Recently, a great amount of interest has been devoted to the analysis of multi-agent systems via the graph theory. In these studies, the \textit{nodes} of a graph represent the agents (such as robots, sensors or individuals), and  \textit{edges} represent the direct interactions between them. For such a representation, a fundamental graph property related to the system robustness is graph connectivity (e.g. \cite{ji2007, zavlanos2007, zavlanos2011} and the literature cited within). As such, the robustness of a system is related to the total number of edges/nodes, whose removal will cause a network disconnection. For the graph theoretical connectivity control of mobile systems against edge failure, the literature is including, but not limited to, optimization based connectivity control (e.g. \cite{zavlanos2011}), continuous feedback connectivity control (e.g. \cite{sabattini2011}), and control based on the estimation of the algebraic connectivity (e.g. \cite{sabattini2012}). In these studies, the authors mainly consider uncertainty in edges and assume a constant number of nodes.  

Maintaining connectivity against the removal of multiple agents is a more challenging problem than maintaining connectivity against multiple edge removal \cite{motevallian2011}. In the last few years, there has been a significant interest in addressing agent loss problem in networked systems. In \cite{motevallian2011} and \cite{summers2009}, the main focus is on the design of robust network topologies that can tolerate a finite number of agent removals. In \cite{summers2009} and \cite{bauso2012}, the authors propose self-repair strategies that create new connections among the neighbors of the failing agent. In addition, a connectivity maintenance strategy based on decentralized estimation of algebraic connectivity is presented in \cite{knorn2009}. Based on their estimations, agents increase or decrease their broadcast radii for satisfying connectivity requirements.

Different from the previous studies, \cite{akkaya2008}, \cite{abbasi2009}, \cite{wang2011} and \cite{abbasi2013} consider mobile agents and propose some agent movements for connectivity restoration of wireless sensor networks in agent failure. In \cite{akkaya2008}, a distributed control algorithm is introduced for connectivity maintenance. Before any failure, the algorithm runs and identifies all critical agents, whose failure will cause network disconnection. Then, it assigns required actions to each agent in advance. The studies in \cite{abbasi2009} and \cite{wang2011} differ from \cite{akkaya2008} by maintaining connectivity through some agent relocations initiated by the failing agent. In \cite{wang2011}, the authors present a centralized algorithm as an alternative to the decentralized scheme given in \cite{abbasi2009}, which is not always feasible in general graphs. Finally, the authors of \cite{abbasi2013} use the shortest path routing table in their algorithm, and they propose a distributed recovery mechanism that maintains the network connectivity with minimal topology change, i.e. not increasing the length of the shortest path between any arbitrary two agents after the reconfiguration. 


In this paper, we present a decentralized recovery mechanism to maintain network connectivity in arbitrary agent removal. The replacement control problem has been initially introduced in \cite{aksaray2013}, and replacements by minimum degree neighbors have been presented as a solution. Here, we generalize the connectivity maintenance scheme as the message passing strategy, and we show that this method maintains connectivity even in the case of agents sharing minimum amount of information, i.e. only node IDs. Moreover, we show that utilizing $\delta$-criticality information in the message passing strategy significantly improves the optimality of the solution.

\section{Mathematical Preliminaries} \label{math_prelim}
An undirected graph, $\mathcal{G}=(V,E)$, consists of a set of nodes, $V$, and a set of undirected edges, $E$. A $k$-length \textit{path}, $p$, is a sequence of nodes $(p_0,p_1,...,p_k)$ such that any $\{p_i,p_{i+1}\} \in E$. Here, $p_i$ is the $i^{th}$ element of $p$, which corresponds to a node $v_j$. A path in $\mathcal{G}$ is called \textit{simple} if it does not have any repeated nodes. The distance between any two nodes is equal to the length of the shortest path between them. The \emph{diameter} of $\mathcal{G}$, $\Delta$, is defined as the largest distance between any two nodes of $\mathcal{G}$. An undirected graph, $\mathcal{G}$, is \emph{connected} if there exists a path between any two nodes of the graph.
 
In a graph, the unweighted adjacency matrix, $\bf{A} \in \mathbb{R}^{\rm n \times n}$, is

\begin{eqnarray}
\bf{A}_{\rm ij} = \bigg\{
\begin{array}{l l}
    1 & \quad \text{if $(v_i,v_j)\in E$,}\\
    0 & \quad \text{otherwise}.\\
  \end{array} 
\end{eqnarray}

The neighbor set, $\mathcal{N}_{v_i}$, of $v_i$ is the set including all adjacent nodes that are connected to $v_i$. 

\begin{equation}
\mathcal{N}_{v_i} = \{v_j \mid (v_i,v_j) \in E \}.
\end{equation}

The \emph{degree} of $v_i$ is the number of nodes adjacent to $v_i$, in other words the cardinality of $\mathcal{N}_{v_i}$.


%
%

\section{Problem Formulation} \label{problem}
Given a networked system with no control strategy for the connectivity maintenance, the overall network will eventually become disconnected as more agents are removed. In fact, it is possible to observe graph disconnection after the removal of a small set of randomly selected agents. We demonstrate this claim by conducting Monte Carlo simulations with Erd\H{o}s-R\'{e}nyi graphs \cite{erdhos1960}. Let a connected graph, $\mathcal{G}$, have $50$ nodes that have a probability of $0.04$ to connect with other nodes. Suppose that a randomly selected node and its incident edges are iteratively removed until the occurrence of the first graph disconnection. When the graph is disconnected for the first time, we record the corresponding iteration number ($i_{dis}$). Then we repeat this process for $1000$ cases that are initiated with different random graphs. The results of the simulations are displayed as an empirical cumulative distribution function of $i_{dis}$ illustrated in Figure~\ref{fig:cdf}. To provide insight into the randomly generated graphs, Figure~\ref{fig:cdf} also displays the average degree distribution. As it is seen from the figure, the removal of $10$ nodes most likely cause disconnection in random graphs with $50$ nodes and an approximate average degree of $2.5$.   

\begin{figure}[h!]
\begin{center}
\resizebox*{8cm}{!}{\includegraphics{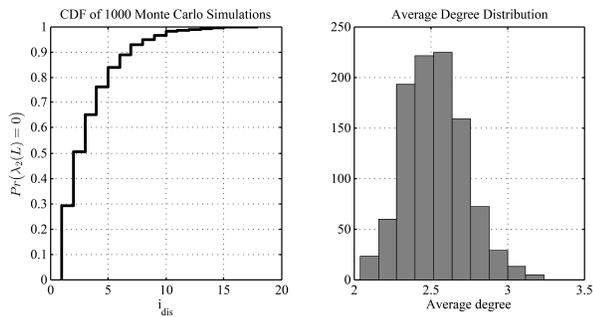}}%
\caption{(a) The empirical CDF for the iteration number when a graph disconnection is observed for the first time. (b) The average degree distribution for the generated Erd\H{o}s-R\'{e}nyi graphs in the Monte Carlo simulations.}%
\label{fig:cdf}
\end{center}
\end{figure}

As the degrees of nodes increase in a graph, the robustness to agent failure increases. However, there is a compromise between improving the robustness and increasing the total edge number. While a robust graph is tolerant to a finite number of agent removals, it involves many redundant edges that increase the overall communication cost of the network. Alternative to designing a network with a large number of redundant links, a control strategy can be utilized to maintain connectivity. Such a control strategy may provide self-reconfiguration on the graph whenever an agent removal occurs. For static networked systems, the removal of an agent will trigger to have new connections in the neighborhood of the removed agent. For mobile networked systems, the removal of an agent will induce some agent movements until the connectivity is maintained. 

In many distributed systems, network connectivity plays an important role in achieving a desired system performance. For instance, connectivity is required for the propagation of local data to achieve coordinated tasks. As such, formation of connected networks are emphasized in spacecraft studies (e.g. \cite{dai2012}, \cite{zhou2013}), where multiple spacecrafts synchronize their attitudes with each other via local interactions. Alternatively, a peer-to-peer network (e.g. Gnutella) is a distributed system, where individual computers communicate directly with each other and share information and resources without using centralized servers. In these systems, network protocols are designed to achieve various objectives, one of which is the ability to operate in a dynamic environment (e.g. \cite{pandurangan2003,ripeanu2002}). In this manner, when a host computer leaves a network, utilizing a connectivity maintenance strategy avoids network partitionings and prevents the performance degradation. Moreover, a group of sensors can be deployed in an area of interest to achieve distributed estimation in a harsh environment. In these systems, the network topology is dynamic due to the possibility of sensor failures or efficient energy management. As such, the sensors may increase or decrease their sensing radii to achieve desired connectivity requirements \cite{knorn2009}. Finally, connectivity is crucial in surveillance missions, where a group of heterogeneous agents or unmanned aerial vehicles (UAV) operate around a base. For example, an efficient task assignment, which also satisfies a connected communication network, provides agents to stream back the surveillance data back to the base \cite{ponda2011}. Similarly, a group of UAVs can monitor a desired region to track radar data and a base process the individually gathered data to estimate the position of a target \cite{casbeer2006}. In such problems, a UAV can go further away from the group to investigate unmonitored areas, or an agent can return to the base. The removal of an agent/UAV may cause a disconnection in the communication network, which  leads to a base not to collect data from the disconnected agent(s)/UAV(s).    

In this study, we consider an undirected connected graph, $\mathcal{G}$, where the nodes of $\mathcal{G}$ represent the feasible points that agents can be assigned to. Here, a feasible point can be abstract (tasks among peers in a computer network) or physical (areas of interest in robotic networks). Furthermore, the edges of $\mathcal{G}$ correspond the required interactions if a pair of agents are assigned on the corresponding nodes (tasks). In this setting, the communication network of agents is the sub-graph of $\mathcal{G}$ based on the agent assignments. Here, our main objective is to maintain the connectivity of the communication network in the arbitrary agent removal by properly assigning agents to the feasible points. One way to maintain the network connectivity in these problems is to replace the removed agent by one of the remaining agents. For instance, if the removal of an agent causes disconnection in the network, then one of its neighbors may replace it to recover the connectivity. If the replacement also causes a disconnection, then another replacement is also required. In this manner, the replacements can be executed until a connected network is obtained. Accordingly, we introduce the $\textit{replacement control problem}$ as follows:  \\

\noindent\textbf{Replacement Control (RC) problem}: \textit{Given a set of agents with a connected communication network, design a decentralized control scheme such that the agents realize minimum number of node replacements to maintain connectivity in the presence of agent removal.}

\section{Message Passing Strategy} \label{replacement_strategy}
For any solution of the RC problem, the sequence of replacements needs to end with a noncritical node since the removal of such nodes does not require any replacements. 
\begin{definition} (Node Criticality)
A node, $v_i$, is critical in $\mathcal{G}$ if the graph, $\mathcal{G}^* = \mathcal{G} - (v_i , E_i)$, obtained by removing $v_i$ and $E_i$ is disconnected; otherwise, $v_i$ is noncritical.
\label{def: criticality}
\end{definition}

Note that a connected graph always has a finite number of noncritical nodes \cite{savchenko2006}.

\begin{proposition}
\textit{(Existence of noncritical nodes)}: \cite{savchenko2006} Let $\mathcal{G}$ be a connected undirected graph. Suppose that each of its nodes has a degree at least $k$. Then $\mathcal{G}$ has at least $k + 1$ noncritical nodes. 
\label{prop: noncritical_existence}
\end{proposition}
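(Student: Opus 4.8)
The plan is to prove the statement by analyzing the decomposition of $\mathcal{G}$ into its \emph{blocks} (maximal biconnected subgraphs) and the associated block--cut tree. The key observation is that a node is noncritical in the sense of Definition~\ref{def: criticality} precisely when it is not a cut vertex, and that a noncritical node belongs to exactly one block, with all of its incident edges contained in that block (a node lying in two or more blocks is necessarily a cut vertex). I would first record the standard structural fact that whenever $\mathcal{G}$ contains at least one cut vertex its block--cut tree is a nontrivial tree, and therefore has at least two leaves, each of which is a block containing exactly one cut vertex.

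Next I would split into two cases. If $\mathcal{G}$ has no cut vertex, then every node is noncritical; since every node has degree at least $k$, the graph must contain at least $k+1$ nodes, and the claim follows immediately. Otherwise, let $B$ be any leaf block and let $c$ be its unique cut vertex. Every node $v \in B \setminus \{c\}$ is noncritical, and because such a $v$ is not a cut vertex all of its neighbors lie inside $B$; hence $|B| \ge \deg(v) + 1 \ge k+1$, so $B$ contributes at least $k$ distinct noncritical nodes. Applying this to two distinct leaf blocks $B_1$ and $B_2$, and using that their noncritical nodes are disjoint because a noncritical node belongs to a single block, I would obtain at least $2k \ge k+1$ noncritical nodes (the case $k=0$ being immediate, since any nontrivial connected graph already has a noncritical node; for $k \ge 1$ one has $2k \ge k+1$).

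The main obstacle I anticipate is the careful justification of the structural facts rather than the counting itself: namely that the nodes of a leaf block other than its cut vertex are genuinely noncritical, that each such node has its \emph{entire} neighborhood contained in the block (this is exactly what converts the degree hypothesis into a lower bound on the block size), and that the contributions from different leaf blocks cannot overlap. An alternative, more self-contained route would avoid the block--cut tree entirely and argue by induction on $|V|$, deleting a suitable noncritical node and tracking how the minimum degree and the number of noncritical nodes evolve; however, I expect the block decomposition to yield the cleanest and most transparent argument, since it reduces the whole claim to the elementary observation that a large-minimum-degree graph forces each leaf block to be large.
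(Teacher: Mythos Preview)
Your argument via the block--cut tree is correct: the case split (no cut vertex versus at least one cut vertex), the observation that a non-cut vertex in a leaf block has all of its neighbors inside that block so that $|B|\ge k+1$, and the disjointness of the noncritical contributions from two leaf blocks all go through exactly as you describe.

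There is nothing to compare against here: the paper does not supply its own proof of this proposition but simply cites it from \cite{savchenko2006}. Your block-decomposition argument is the standard one and would be a fine replacement for the bare citation. The inductive alternative you mention is workable but messier, since deleting a noncritical node can drop the minimum degree and force a separate bookkeeping argument; the block--cut tree route is indeed the cleaner choice.
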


For any connected graph, there are always at least two noncritical nodes, and the goal of the RC problem is to find one. The following remark presents a condition for a trivial noncritical node in a graph.

\begin{remark}
Given a connected graph $\mathcal{G} = (V,E)$, let $v_i \in V$ be a leaf node such that $|\mathcal{N}_{v_i}| = 1$. Then, $v_i$ is noncritical in $\mathcal{G}$ because any simple path involving $v_i$ either starts or ends with $v_i$. Hence, its removal will not cause a disconnection between any two nodes.
\label{rem: leaf}
\end{remark}

In the RC problem, a replacement is assumed to occur in between a node and one of its neighbors. Therefore, the sequence of replacements can also be defined as a path from the removed node to a noncritical node. 

\begin{remark}
Let $\mathcal{G}$ be a connected undirected graph. It follows from Proposition~\ref{prop: noncritical_existence} that there always exists a path from any node in $\mathcal{G}$ to a noncritical node.
\label{path_existence}
\end{remark}

Note that a centralized controller can solve the RC problem by finding a shortest path between the removed node and a noncritical node. Here, the optimal solution is obtained by assuming the availability of the overall graph structure. The goal driving this work is to find a decentralized scheme that can perform close to optimal. 

\begin{definition} (Maximal simple path)
Let $\mathcal{G}=(V,E)$ be a connected undirected graph, and let $\mathcal{N}_{v_i}$ denote the neighbors of $v_i \in V$. Suppose that $p=(p_0,p_1,...,p_k)$ is a simple path with a length of $k$. Then $p$ is a maximal simple path if $\mathcal{N}_{p_k} \subseteq \{p_0,p_1,...,p_k\}$.
\label{def: max_simplepath}
\end{definition}

\begin{theorem} \cite{aksaray2013}
Given a connected undirected graph $\mathcal{G}$, a maximal simple path on $\mathcal{G}$ always ends with a noncritical node.
\label{thm: mainThm}
\end{theorem}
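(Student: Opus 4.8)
The plan is to argue by contradiction, leveraging the defining property of a maximal simple path to pin down the location of every neighbor of the terminal node. Suppose, for the sake of contradiction, that $p_k$ is critical, so that $\mathcal{G}^* = \mathcal{G} - (p_k, E_{p_k})$ is disconnected. By Definition~\ref{def: max_simplepath}, maximality gives $\mathcal{N}_{p_k} \subseteq \{p_0, p_1, \ldots, p_k\}$; since $\mathcal{G}$ has no self-loops, this sharpens to $\mathcal{N}_{p_k} \subseteq \{p_0, p_1, \ldots, p_{k-1}\}$. Thus every neighbor of $p_k$ already appears earlier in the path.

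The next step is to observe that the prefix $\{p_0, p_1, \ldots, p_{k-1}\}$ lies entirely within a single connected component of $\mathcal{G}^*$. Indeed, the edges $\{p_i, p_{i+1}\}$ for $0 \le i \le k-2$ do not involve $p_k$, so they all survive in $\mathcal{G}^*$; consequently $(p_0, p_1, \ldots, p_{k-1})$ remains a connected sub-path after the removal of $p_k$. Combining this with the previous paragraph, all neighbors of $p_k$ reside in one and the same component of $\mathcal{G}^*$.

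To reach the contradiction, I would invoke the standard cut-vertex property: in a connected graph, if removing a vertex produces two or more components, then that vertex must have at least one neighbor in each of those components. Otherwise, a component containing no neighbor of the removed vertex could not have been joined to the rest of $\mathcal{G}$, contradicting connectivity. Since we assumed $p_k$ is critical, $\mathcal{G}^*$ has at least two components, and hence $p_k$ has neighbors in at least two distinct components. This directly contradicts the conclusion that all neighbors of $p_k$ lie in a single component. Therefore $p_k$ cannot be critical, i.e.\ the maximal simple path ends with a noncritical node.

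The bulk of the argument is elementary once the neighbor-confinement observation is in place; the one step deserving care is the cut-vertex property, whose justification relies essentially on the connectivity of the ambient graph $\mathcal{G}$. An equivalent direct route avoids the contradiction entirely: given any two nodes $u, w \neq p_k$, take a $u$--$w$ path in $\mathcal{G}$, and wherever it passes through $p_k$ reroute it between the two incident neighbors, both of which lie in the single connected prefix component, to obtain a $u$--$w$ path in $\mathcal{G}^*$, thereby proving $\mathcal{G}^*$ connected.
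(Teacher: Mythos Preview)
Your argument is correct. The confinement observation $\mathcal{N}_{p_k}\subseteq\{p_0,\ldots,p_{k-1}\}$ together with the survival of the prefix edges in $\mathcal{G}^*$ indeed forces all neighbors of $p_k$ into a single component, and the cut-vertex property you invoke is exactly the right tool to close the contradiction. The alternative direct rerouting argument you sketch at the end is equally valid and arguably cleaner.

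As for comparison with the paper: there is nothing to compare against. The paper does not prove Theorem~\ref{thm: mainThm}; it merely states the result and cites \cite{aksaray2013} for the proof. So your write-up stands on its own, and it is a complete and standard proof of the claim.
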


\begin{corollary}
A sequence of replacements along a maximal simple path, $(p_0,p_1,...,p_k)$, on $\mathcal{G}$, such that $p_0$ represents any arbitrary removed node and any $p_{i+1} \in \mathcal{N}_{p_i} \setminus \{p_0,p_1,...,p_i\}$, maintains the graph connectivity.  
\label{cor: guarantee_connectivity}
\end{corollary}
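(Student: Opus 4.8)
The plan is to reduce the corollary to Theorem~\ref{thm: mainThm} together with Definition~\ref{def: criticality} by tracking which nodes of $\mathcal{G}$ remain occupied as the chain of replacements is carried out. First I would make precise the effect of a single replacement: since the condition $p_{i+1} \in \mathcal{N}_{p_i} \setminus \{p_0,\ldots,p_i\}$ forces consecutive path nodes to be adjacent, the agent at $p_{i+1}$ may legitimately relocate into the vacancy left at $p_i$, and the only change to the occupancy pattern is that $p_i$ becomes occupied again while $p_{i+1}$ becomes vacant. The same condition rules out revisiting any node, so $(p_0,\ldots,p_k)$ is a simple path, as required.

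Next I would compose these single steps by induction on the step index. Removing the agent at $p_0$ creates the initial vacancy at $p_0$; the agent at $p_1$ fills it and vacates $p_1$; the agent at $p_2$ fills $p_1$ and vacates $p_2$; and so on, so that after the $i$-th replacement the unique vacant node is $p_i$ while every other node retains its original occupancy. When the whole sequence terminates, the set of occupied nodes is therefore exactly $V \setminus \{p_k\}$, and the resulting communication network is the induced subgraph $\mathcal{G} - (p_k, E_k)$.

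It then remains to apply the earlier results. Because $(p_0,\ldots,p_k)$ is a \emph{maximal} simple path, Definition~\ref{def: max_simplepath} gives $\mathcal{N}_{p_k} \subseteq \{p_0,\ldots,p_k\}$, and hence Theorem~\ref{thm: mainThm} certifies that $p_k$ is noncritical. By Definition~\ref{def: criticality}, noncriticality of $p_k$ means precisely that $\mathcal{G} - (p_k, E_k)$ is connected. Combining this with the previous paragraph, the terminal communication network is connected, which is the assertion of the corollary.

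The main obstacle is conceptual rather than computational: the intermediate configurations produced along the way — in particular the graph right after $p_0$ is removed but before any relocation — may well be disconnected, so one cannot argue connectivity step by step. The resolution is to observe that connectivity need only be certified at the end of the sequence, and that the cumulative accounting of vacancies collapses the entire replacement chain into a single effective deletion of the terminal node $p_k$. Once this equivalence is established, the conclusion follows immediately from Theorem~\ref{thm: mainThm}.
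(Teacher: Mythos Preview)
Your argument is correct and follows essentially the same line as the paper's proof: both observe that carrying out the replacements along the maximal simple path leaves the network in the state $\mathcal{G}-(p_k,E_k)$, and then invoke Theorem~\ref{thm: mainThm} to conclude that $p_k$ is noncritical and hence this graph is connected. Your version is more explicit about the inductive vacancy-tracking and about the intermediate states possibly being disconnected, but the underlying approach is identical.
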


\begin{proof}
The maximal simple path $(p_0,p_1,...,p_k)$ is the replacement path where $p_0$ is any arbitrary removed node and for $0 \leq i \leq k-1$ any $p_i$ is replaced by $p_{i+1}$. After the replacements are realized, the graph will have a new structure as if $p_k$ is removed from the system. From Theorem~\ref{thm: mainThm}, we know that $p_k$ is noncritical so its removal does not cause any disconnection in $\mathcal{G}$. 
\end{proof}

In light of the preceding facts, we introduce a decentralized connectivity maintenance scheme called \emph{message passing strategy} (MPS). Let $p_0$ be any arbitrary node that will be removed from $\mathcal{G}$. The objective of MPS is to find a sequence of replacements, which is initiated by $p_0$ and ending with a noncritical node, by using only some local information. In this manner, the replacements will result in a graph reconfiguration as if the final node in the replacement sequence, which is noncritical, is removed from $\mathcal{G}$ instead of $p_0$. 

The outline of MPS is as follows: Before the removal of $p_0$, first $p_0$ creates a message including its own node ID as $\{p_0\}$ and checks whether it is a leaf node. If it is a leaf node, then it is noncritical (from Remark~\ref{rem: leaf}) and its removal will not cause a disconnection. Otherwise, it selects a node, $p_1$, from $\mathcal{N}_{p_0} \setminus \{p_0\}$. Then, $p_0$ sends the message to $p_1$, which will replace $p_0$. In this respect, whenever a node, $p_i$, receives a message, $\{p_0,...,p_{i-1}\}$, from $p_{i-1}$, before $p_i$ replaces $p_{i-1}$, it adds its individual node ID to the bottom of the message as $\{p_0,...,p_{i-1},{\bf p_i}\}$, and it sends the message to one of its neighbors from the set $\mathcal{N}_{p_i} \setminus \{p_0,...,p_i\}$. Eventually, the message passing process, whose pseudo-code is displayed in Algorithm~1, stops when $\mathcal{N}_{p_i} \setminus \{p_0,...,p_i\} = \emptyset$ or $p_0$ is a leaf node.

\begin{center}
\resizebox{\columnwidth}{!}{
\begin{tabular}{l l} 
\hline
\bf{Algorithm 1: Message Passing Strategy (MPS)}\\
\hline
$Input :$ An arbitrary node, $p_0$, from $\mathcal{G}$  \\
$Output :$ Connectivity maintenance in the removal of $p_0$ \\
$Assumption:$ Each node shares its unique node ID with its neighbors.\\
\mbox{\small $\;1:\;$}\textbf{initialization:} $p_i \gets p_0; \quad \quad \mathcal{N}_{p_i} \gets \mathcal{N}_{p_0}; \quad$ $message \gets (p_0)$; \\
\mbox{\small $\;2:\;$}\textbf{if}\hspace{0.1cm}  $| \mathcal{N}_{p_0} |=1$ \\
\mbox{\small $\;3:\;$}\hspace{0.45cm} no replacements required; \\
\mbox{\small $\;4:\;$}\textbf{else}\hspace{0.1cm}  $ $ \\
\mbox{\small $\;5:\;$}\hspace{0.45cm} \textbf{while} \hspace{0.1cm} $\mathcal{N}_{p_i} \setminus message \neq \emptyset$\\
\mbox{\small $\;7:\;$}\hspace{0.90cm} $p_{i+1}$ $\gets$ $v \quad s.t. \quad v \in \mathcal{N}_{p_i} \setminus message$; \\
\mbox{\small $\;8:\;$}\hspace{0.90cm} $p_i$ sends $message$ to $p_{i+1}$;\\
\mbox{\small $\;9:\;$}\hspace{0.90cm} $p_i$ replaces the second last node in the $message$; \\
\mbox{\small $\;10:\;$}\hspace{0.75cm} $p_i \gets p_{i+1}$; \quad $\mathcal{N}_{p_i} \gets \mathcal{N}_{p_{i+1}}$; \\ 	
\mbox{\small $\;6:\;$}\hspace{0.90cm} $message \gets (message , p_i)$ ;\\
\mbox{\small $\;11:\;$}\hspace{0.45cm} \textbf{end while} \\
\mbox{\small $\;12:\;$}\textbf{end if}\hspace{0.1cm}  $ $ \\
\hline
\end{tabular}
}
\end{center}
\vskip1ex

\begin{proposition}
The message obtained from MPS results in a set of ordered nodes, which represents either a leaf node or a maximal simple path.
\label{prop: MPS_max_simple_path}
\end{proposition}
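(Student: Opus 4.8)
The plan is to split on the guard checked in lines~2--3 of Algorithm~1 and then, for the nontrivial branch, to verify the two defining conditions of Definition~\ref{def: max_simplepath} by tracking a loop invariant. If $|\mathcal{N}_{p_0}| = 1$, the algorithm halts immediately with $message = (p_0)$, which represents the leaf node $p_0$ (noncritical by Remark~\ref{rem: leaf}); this matches the first alternative of the claim, so nothing further is needed in this case.

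For the remaining case I would analyze the \textbf{while} loop and maintain the invariant that, after the iteration in which $p_i$ is appended, the recorded sequence $message = (p_0, p_1, \ldots, p_i)$ is a simple path in $\mathcal{G}$. The invariant has two parts. First, it is a path: line~7 selects $p_{i+1} \in \mathcal{N}_{p_i} \setminus message$, so in particular $p_{i+1} \in \mathcal{N}_{p_i}$ and hence $\{p_i, p_{i+1}\} \in E$, extending the sequence by a legitimate edge. Second, it is simple: the same selection forbids $p_{i+1}$ from lying in $\{p_0, \ldots, p_i\}$, so no node is ever repeated. Since each pass through the loop appends a node not previously recorded and $V$ is finite, the message strictly grows and the loop terminates after at most $|V|-1$ iterations; let $p_k$ denote the last node appended.

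It then remains to establish maximality at $p_k$. The loop exits exactly when its guard fails, i.e. when $\mathcal{N}_{p_k} \setminus message = \emptyset$ with $message = (p_0, \ldots, p_k)$, which is precisely $\mathcal{N}_{p_k} \subseteq \{p_0, \ldots, p_k\}$ --- the maximality condition of Definition~\ref{def: max_simplepath}. Combining this with the simple-path invariant at termination shows that the output is a maximal simple path, completing the second alternative.

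The only delicate bookkeeping, and the step I would treat most carefully, is the nonstandard line ordering in Algorithm~1 (line~6 is written after line~10): I must confirm that each $p_i$ is committed to the message before it is used as the ``already-visited'' set that line~7 subtracts from, so that the set appearing in the terminal guard genuinely equals $\{p_0, \ldots, p_k\}$ and the invariant is preserved across iterations. Aside from this, the argument is essentially invariant-tracking plus a finiteness-based termination bound, with no substantive estimates required.
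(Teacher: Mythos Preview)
Your argument is correct and follows essentially the same route as the paper's proof: split on the leaf-node guard, then verify that consecutive selections yield adjacent nodes (path), that the exclusion $\mathcal{N}_{p_i}\setminus message$ prevents repetition (simple), and that the loop's exit condition is precisely the maximality clause of Definition~\ref{def: max_simplepath}. Your version is slightly more formal in casting this as a loop invariant and in adding an explicit finiteness-based termination bound, but the substance is identical.
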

\begin{proof}
The message obtained from MPS is either $\{ p_0 \}$ or $\{p_0,...,p_i,...,p_k\}$. If it is $\{p_0\}$, then $| \mathcal{N}_{p_0} |=1$ implying that $p_0$ is a leaf node. If the message is $\{p_0,...,p_i,...,p_k\}$, it involves consecutive pairs of nodes, $(p_i,p_{i+1}) \in E$, thus the message always represents a path in $\mathcal{G}$. Additionally, the message never involves repeated nodes because each $p_i$ selects $p_{i+1}$ from $\mathcal{N}_{p_i} \setminus \{p_0,...,p_i\}$. Thus, the path is always simple. Finally, MPS stops whenever $\mathcal{N}_{p_k} \setminus \{p_0,...,p_i,...,p_k\} = \emptyset$. From Definition~\ref{def: max_simplepath}, the ordered nodes in the message is a maximal simple path.
\end{proof}

\begin{corollary}
MPS always stops at a noncritical node. Hence, MPS guarantees connectivity maintenance in the removal of any arbitrary node from $\mathcal{G}=(V,E)$.
\label{cor: MPS_conn_maint}
\end{corollary}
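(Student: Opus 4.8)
The plan is to prove both claims by chaining together the three facts already established about MPS and maximal simple paths, so the argument reduces to a clean case split with no new combinatorial work. First I would invoke Proposition~\ref{prop: MPS_max_simple_path}, which guarantees that the message returned by MPS is either the singleton $\{p_0\}$ (meaning $p_0$ is a leaf node) or a maximal simple path $(p_0,\dots,p_k)$. This dichotomy will organize the entire proof into the same two cases for each of the two claims.

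For the first claim, that MPS stops at a noncritical node, I would treat the two cases separately. In the leaf case we have $|\mathcal{N}_{p_0}|=1$, so Remark~\ref{rem: leaf} immediately gives that $p_0$ is noncritical and MPS terminates at $p_0$ with no replacement. In the path case, Theorem~\ref{thm: mainThm} states that a maximal simple path always ends with a noncritical node, so the terminal node $p_k$ at which MPS halts is noncritical. Combining the cases shows that MPS always stops at a noncritical node, regardless of which outcome Proposition~\ref{prop: MPS_max_simple_path} produces.

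For the second claim, that connectivity is maintained, I would split on the same two cases. In the leaf case, removing the noncritical $p_0$ causes no disconnection by Definition~\ref{def: criticality}, so no replacement is needed and connectivity is trivially preserved. In the path case, the replacement sequence executed by MPS coincides exactly with the replacement sequence along the maximal simple path described in Corollary~\ref{cor: guarantee_connectivity}, since each $p_{i+1}$ is chosen from $\mathcal{N}_{p_i}\setminus\{p_0,\dots,p_i\}$; that corollary then yields that the graph remains connected after the replacements. Hence connectivity is maintained in the removal of $p_0$ in both cases.

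The only point requiring care is the bookkeeping that identifies the concrete MPS replacement sequence with the abstract maximal-simple-path replacement sequence to which Corollary~\ref{cor: guarantee_connectivity} applies. Once Proposition~\ref{prop: MPS_max_simple_path} certifies that the message is a genuine maximal simple path and that the node-selection rule matches, this identification is immediate, so I do not anticipate any substantive obstacle beyond correctly aligning the two cases with the results they invoke.
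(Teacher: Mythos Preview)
Your proposal is correct and follows essentially the same approach as the paper: a leaf/non-leaf case split, with Remark~\ref{rem: leaf} handling the leaf case and Theorem~\ref{thm: mainThm} together with Corollary~\ref{cor: guarantee_connectivity} handling the maximal simple path case. The only cosmetic difference is that you invoke Proposition~\ref{prop: MPS_max_simple_path} directly to obtain the dichotomy, whereas the paper re-argues termination inline via a node-counting argument before applying Theorem~\ref{thm: mainThm}; your version is slightly cleaner in this respect but not substantively different.
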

\begin{proof}
Let $p_0 \in V$ be any arbitrary node that will be removed from $\mathcal{G}$. If $p_0$ is a leaf node, MPS stops at $p_0$, and the connectivity maintenance is an immediate result. Otherwise, $p_0$ generates a message as $\{p_0\}$, and the message is modified as $\{p_0,...,p_i\}$ whenever it is received by $p_i \in V$. Let $N+1$ be the total number of nodes in $\mathcal{G}$. In this respect, as $i \rightarrow N$, $\{p_0,...,p_i\} \rightarrow \{p_0,...,p_N\} = V$. Eventually, there exist an instant $k=i \leq N$, at which $\mathcal{N}_{p_k} \subseteq \{p_0,...,p_k\}$. From Theorem~\ref{thm: mainThm}, $p_k$ is a noncritical node because it satisfies $\mathcal{N}_{p_k} \setminus \{p_0,...,p_k\} = \emptyset$. Consequently, MPS always stops at a noncritical node. Moreover, from Corollary~\ref{cor: guarantee_connectivity} the replacements based on MPS always guarantee connectivity maintenance because the graph is reconfigured as if $p_k$ is removed from $\mathcal{G}$ instead of $p_0$.
\end{proof}

An illustration for MPS is displayed in Figure~\ref{fig: MPSdemo}, where there is an initially connected graph with 7 nodes. As it is seen from Figure~\ref{fig: MPSdemo}(b), the removal of $v_0$ will create a disconnection in the graph. If each node runs MPS, then a replacement path is generated as $\{v_0,v_2,v_4\}$ such that $v_2$ replaces $v_0$, and $v_4$ replaces $v_2$. Note that $\{v_0,v_2,v_4\}$ is not the only replacement path, i.e. $\{v_0,v_1,v_5\}$. Consequently, the system reconfigures itself to maintain connectivity, and, in the resulting configuration, it is guaranteed to observe the removal of a noncritical node (e.g. $v_4$) instead of the removal an arbitrarily removed node (e.g. $v_0$).   

\begin{figure}[h!]
\begin{center}
\resizebox*{\columnwidth}{!}{\includegraphics{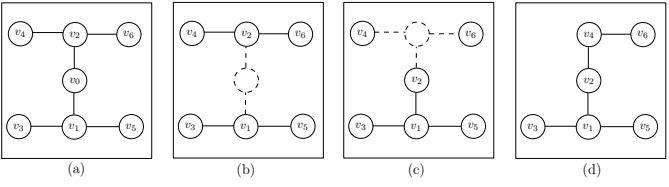}}%
\caption{An illustration for MPS. (a) Initially connected graph. (b) $v_0$ will leave the system. Since it is not a leaf node, it creates a ${\it message}$ as $\{v_0\}$ and selects a neighbor from $\mathcal{N}_{v_0} \setminus \{v_0\} = \{v_1,v_2\}$ to replace itself. (c) $v_2$ receives the ${\it message}$ and modifies it as $\{v_0,v_2\}$. Then, it selects a neighbor from $\mathcal{N}_{v_2} \setminus \{v_0,v_2\} = \{v_4,v_6\}$ to replace itself. (d) $v_4$ receives the ${\it message}$ and modifies it as $\{v_0,v_2,v_4\}$. It attempts to select a neighbor from $\mathcal{N}_{v_4} \setminus \{v_0,v_2,v_4\} = \emptyset$ for its replacement. Since $\mathcal{N}_{v_4} \setminus \{v_0,v_2,v_4\} = \emptyset$, $v_4$ cannot send the message to any node and the algorithm stops.}
\label{fig: MPSdemo}
\end{center}
\end{figure}

\subsection{Performance of MPS}
Given a networked system, reactive schemes for connectivity maintenance result in some changes in the graph topology. While maintaining the graph connectivity, an important aspect is not to cause significant changes in the graph properties such as the total number of edges or the maximum node degree. Note that the total number of edges and the maximum node degree can be directly related to the overall communication cost, whose increase is not desirable for a networked system containing agents with limited power capacity. 

\begin{proposition}
A sequence of replacements along a maximal simple path, $(p_0,p_1,...,p_k)$, on $\mathcal{G}$, such that every $p_{i+1} \in \mathcal{N}_{p_i} \setminus \{p_0,p_1,...,p_i\}$, guarantees no increase in the total number of edges and maximum node degree in the presence of any arbitrary node removal.   
\label{cor: guarantee_edge_degree}
\end{proposition}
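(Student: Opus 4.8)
The plan is to show that the reactive reconfiguration produced by the message passing strategy never manufactures a communication link that was absent from $\mathcal{G}$, so that the resulting network is nothing more than $\mathcal{G}$ with a single node deleted, and then to invoke the elementary fact that deleting a vertex can raise neither the edge total nor the maximum degree.

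First I would recall the net effect of the replacement cascade. Since every agent is assigned to a node of $\mathcal{G}$ and two occupied positions communicate precisely when they are adjacent in $\mathcal{G}$, the communication network is at every instant an induced subgraph of $\mathcal{G}$; immediately before the removal it is $\mathcal{G}$ itself. As already argued in the proof of Corollary~\ref{cor: guarantee_connectivity}, executing the replacements along the maximal simple path $(p_0,p_1,\dots,p_k)$ in order leaves every position except $p_k$ occupied and vacates $p_k$. Because each relocation moves an agent only across an edge already in $\mathcal{G}$, no new link is ever created, and the reconfigured communication network is exactly the induced subgraph $\mathcal{G}-p_k$ on $V\setminus\{p_k\}$. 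This step---verifying that the scheme yields an induced subgraph rather than a graph with added edges---is where all the content lies.

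Given that, the conclusion is immediate by monotonicity under vertex deletion. Writing $\deg_{\mathcal{G}}(\cdot)$ for degree in $\mathcal{G}$ and $\deg_{\max}(\cdot)$ for the largest degree, removing $p_k$ deletes exactly its incident edges, so $|E(\mathcal{G}-p_k)| = |E(\mathcal{G})| - \deg_{\mathcal{G}}(p_k) \le |E(\mathcal{G})|$; moreover every surviving node $v$ satisfies $\deg_{\mathcal{G}-p_k}(v)\le\deg_{\mathcal{G}}(v)\le \deg_{\max}(\mathcal{G})$, while $p_k$ no longer contributes, whence $\deg_{\max}(\mathcal{G}-p_k)\le\deg_{\max}(\mathcal{G})$. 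Both the total number of edges and the maximum node degree are therefore non-increasing, which is the assertion.

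I expect the only genuinely load-bearing point to be the second paragraph: pinning down the model so that an agent relocation is a move along an existing edge and the communication graph stays an induced subgraph of $\mathcal{G}$. This is exactly the feature that distinguishes the message passing strategy from self-repair schemes that splice in new connections among a failing agent's neighbors---such schemes can increase both edge count and degree, whereas here the reconfigured network is a single-vertex deletion of the original and hence dominated by it in both quantities. Everything after that observation is the routine monotonicity of vertex deletion.
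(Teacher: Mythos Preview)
Your argument is correct and rests on the same key observation as the paper: after the cascade of replacements the communication network is precisely the induced subgraph $\mathcal{G}-p_k$, so the effect is a single vertex deletion. For the edge count the two proofs are identical.

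For the maximum degree the paper takes a more laborious route: it performs a case analysis on whether the removed node $p_0$ happens to be a maximum-degree node, then tracks the degree of the agent $p_1$ that moves into $p_0$'s slot (splitting further on whether $k=1$), and finally handles separately the case where the maximum-degree node is untouched by the path. Your one-line monotonicity argument---every surviving vertex in $\mathcal{G}-p_k$ has degree at most its degree in $\mathcal{G}$---subsumes all of these cases at once and is both shorter and more transparent. The paper's case analysis buys nothing additional; your approach is simply a cleaner execution of the same idea.
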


\begin{proof}
Let $p=(p_0,p_1,...,p_k)$ be the replacement path, $\mathcal{G}^*$ be the new graph structure after the replacements. Then, this corollary is proven in two parts: (1) In the removal of an arbitrary node, $p_0$, $p$ results in $\mathcal{G}^*$, which corresponds to the removal of $p_k$ and its adjacent edges from $\mathcal{G}$. As a result, the total number of edges decrease as the agents are removed. (2) Let $p_0$ in $p$ be the agent that has the maximum degree $d_{max}$ in $\mathcal{G}$. If $p_0$ is removed, then $p_{1}$ replaces $p_0$. Now, if $k = 1$, then $p_1$ is the noncritical node that will not be replaced. As a consequence, the degree of $p_{1}$ becomes $d_{max}-1$ after the replacement. If $k \neq 1$, then $p_{1}$ will be replaced by $p_{2}$. Hence, the degree of $p_{1}$ becomes $d_{max}$ after the replacements. In both cases, $p_{1}$ becomes the node with the maximum degree in $\mathcal{G}^*$ after replacing $p_0$. Finally, in the removal of an arbitrary node, which does not correspond to the maximum degree node $\tilde{v}$, either no replacements occur in the neighborhood of $\tilde{v}$, or the replacements in the neighborhood of $\tilde{v}$ may cause at most one reduction in $d_{max}$. As a result, the maximum node degree in $\mathcal{G}^*$ becomes either $d_{max}$ or $d_{max}-1$.

\end{proof}

The optimal solution satisfying the minimum number of replacements for the RC problem can be obtained by a centralized controller by finding the shortest path between the removed node and a noncritical node on the graph. Note that such a centralized controller requires the complete information about the graph. The objective of MPS is to solve the RC problem only by using some local and partial information. Due to utilizing limited information, MPS may not necessarily guarantee the optimal solution for any graphs. In this section, we will discuss the performance of MPS for various graph structures.
 
 
\begin{proposition}
In any undirected connected graph, $\mathcal{G}=(V,E)$, the maximum number of replacements that can occur via MPS is $(|V|-1)$.  
\label{prop: anygraph_perf}
\end{proposition}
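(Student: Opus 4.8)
The plan is to reduce the claim to a short counting argument on the length of the path produced by MPS. By Proposition~\ref{prop: MPS_max_simple_path}, the output of MPS is either a single leaf node $\{p_0\}$ or a maximal simple path $(p_0,p_1,\ldots,p_k)$. In the leaf case no replacement is executed at all, so the bound $|V|-1$ holds trivially; the content of the statement therefore lies entirely in the second case.

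First I would pin down exactly what quantity is being bounded. Along the replacement path $(p_0,p_1,\ldots,p_k)$, each node $p_i$ is replaced by its successor $p_{i+1}$ for $0 \le i \le k-1$, i.e.\ there is precisely one replacement per edge $(p_i,p_{i+1})$ of the path. Consequently the number of replacements performed by MPS equals the path length $k$, and bounding the number of replacements is the same as bounding $k$. I would then invoke simplicity: since $p$ is a simple path, the nodes $p_0,p_1,\ldots,p_k$ are pairwise distinct and hence form a subset of $V$ of cardinality $k+1$. This forces $k+1 \le |V|$, so $k \le |V|-1$, which is exactly the asserted upper bound on the number of replacements.

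To argue that $|V|-1$ is genuinely attainable (so that it is the \emph{maximum} and not merely an upper bound), I would exhibit a graph realizing it. On a cycle $C_{|V|}$, starting from any node, the greedy selection of an unvisited neighbor in MPS traverses every vertex before the set $\mathcal{N}_{p_i}\setminus\{p_0,\ldots,p_i\}$ becomes empty, producing a maximal simple path of length $|V|-1$ and therefore $|V|-1$ replacements.

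There is no deep obstacle here: the only mild subtlety is the bookkeeping in the first step, namely confirming that the replacement count equals the number of \emph{edges} of the path (i.e.\ $k$) rather than the number of \emph{nodes} ($k+1$), so as not to be off by one. Once that correspondence is fixed, the bound follows immediately from the no-repeated-nodes property of simple paths established in Proposition~\ref{prop: MPS_max_simple_path} together with Definition~\ref{def: max_simplepath}.
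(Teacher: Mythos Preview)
Your argument is correct and follows essentially the same route as the paper: both invoke Proposition~\ref{prop: MPS_max_simple_path} to obtain a simple path and then use the pigeonhole/no-repeated-nodes observation to bound its length by $|V|$, hence the number of replacements by $|V|-1$. You go slightly beyond the paper by explicitly treating the leaf case and by exhibiting the cycle $C_{|V|}$ to certify that the bound is attained; the paper's own proof establishes only the upper bound without an explicit tightness example.
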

\begin{proof}
From Proposition~\ref{prop: MPS_max_simple_path}, MPS results in a message that is the sequence of replacements represented as a maximal simple path, $p$. Let $|p|\geq |V|+1$, then at least one node appears multiple times in $p$, thus $p$ is not simple. This is a contradiction, hence $|p|\leq|V|$ implying an upper bound for the number of replacements as $|V|-1$.    
\end{proof}

\begin{definition}
A tree graph is an undirected graph in which any two nodes are connected by exactly one simple path.
\label{def: tree}
\end{definition}

\begin{proposition}
In tree graphs, $\mathcal{G}=(V,E)$, the maximum number of replacements that can happen via MPS is $(\Delta-1)$, where $\Delta$ is the diameter of $\mathcal{G}$.  
\label{prop: treegraph_perf}
\end{proposition}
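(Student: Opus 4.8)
The plan is to prove the bound in two directions: first that no run of MPS on a tree can perform more than $\Delta-1$ replacements, and then exhibiting a run that attains exactly $\Delta-1$. Throughout I would lean on two structural facts about trees. The first is that in a tree the noncritical nodes are precisely the leaves: leaves are noncritical by Remark~\ref{rem: leaf}, while removing any internal node separates the subtrees hanging off it, so internal nodes are critical. The second is that between any two nodes there is a unique simple path (Definition~\ref{def: tree}), so the length of that path equals the distance between its endpoints. I would also record the bookkeeping observation that a replacement path $(p_0,\dots,p_k)$ corresponds to exactly $k$ replacements, and that whenever at least one replacement occurs the removed node $p_0$ cannot be a leaf, since otherwise MPS terminates immediately with no replacement.

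For the upper bound I would start from the output of MPS, which by Proposition~\ref{prop: MPS_max_simple_path} is a maximal simple path $p=(p_0,\dots,p_k)$ whose terminal node $p_k$ is noncritical by Theorem~\ref{thm: mainThm}, hence a leaf. Assuming $k\ge 1$, the initial node $p_0$ is internal and therefore has a neighbor $u$ distinct from $p_1$. The key step is to argue that $u\notin\{p_1,\dots,p_k\}$: if $u$ coincided with some $p_j$ (necessarily $j\ge 2$, as $u\neq p_1$), then $p_0,p_1,\dots,p_j=u$ together with the edge $\{p_0,u\}$ would form a cycle, contradicting acyclicity of the tree. Consequently $(u,p_0,p_1,\dots,p_k)$ is a simple path of length $k+1$, and by uniqueness of paths in trees this length equals the distance between $u$ and $p_k$. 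Since the diameter dominates every pairwise distance, $\Delta\ge k+1$, i.e. $k\le\Delta-1$.

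For achievability I would fix a diametral path $(w_0,w_1,\dots,w_\Delta)$, whose endpoints $w_0,w_\Delta$ are necessarily leaves. Taking $\Delta\ge 2$ (the cases $\Delta\le 1$ being trivial, with $\Delta-1$ equal to $0$), the node $w_1$ is internal, so MPS may legitimately be initiated at $p_0=w_1$ and, at each step, may select the successor along the diametral path, since each $w_{i+1}\in\mathcal{N}_{w_i}\setminus\{w_0,\dots,w_i\}$. When the walk reaches the leaf $w_\Delta$ we have $\mathcal{N}_{w_\Delta}\subseteq\{w_1,\dots,w_\Delta\}$, so MPS halts, and the realized path $(w_1,\dots,w_\Delta)$ has length $\Delta-1$, giving $\Delta-1$ replacements. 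Combining the two directions yields the claimed maximum.

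I expect the main obstacle to be the sharp constant: explaining why the answer is $\Delta-1$ rather than $\Delta$. The whole content sits in the observation that MPS produces replacements only when $p_0$ is internal, which is exactly what lets me prepend one extra edge $\{u,p_0\}$ and charge the full diameter against a path one longer than the replacement path. Getting the inequality direction and the conversion between edge count and node count correct (a length-$k$ path $\leftrightarrow$ $k$ replacements $\leftrightarrow$ $k+1$ nodes) is the only delicate bookkeeping.
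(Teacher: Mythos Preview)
Your proof is correct, and the achievability half coincides with the paper's: both remove the second node on a diametral path and follow it to the far leaf, yielding exactly $\Delta-1$ replacements. Your upper-bound argument, however, is genuinely different and in fact more complete than the paper's. The paper simply asserts that the diameter equals the length of the longest maximal simple path, observes that such a path has leaf endpoints, and then reasons only about removing $p_0$ or $p_1$ on that specific diametral path, leaving the general upper bound implicit. Your prepending trick---using that the start node $p_0$ of any nontrivial MPS run must be internal, and hence has a neighbor $u\notin\{p_1,\dots,p_k\}$ that can be attached to embed the replacement path into a simple path of length $k+1$---gives $k+1\le\Delta$ directly for every run and every removed node, and it isolates precisely why the sharp constant is $\Delta-1$ rather than $\Delta$. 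What you gain is a self-contained argument where the paper's is informal; what the paper's route would buy, if spelled out, is the slightly stronger structural statement that every length-$\Delta$ maximal simple path in a tree is a diametral path with two leaf endpoints.
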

\begin{proof}
Note that a noncritical node in a tree graph is always a leaf node, and a critical node always has a degree of 2. In this manner, the diameter of a tree graph corresponds to the length of the longest maximal simple path. Let $\{p_0,p_1,....,p_{\Delta-1},p_{\Delta}\}$ denote to the longest maximal simple path. In this path, both $p_0$ and $p_{\Delta}$ are leaf nodes (noncritical), and the nodes in between are critical. If $p_0$ is the removed node, then MPS does not initiate replacements. If $p_1$ is the removed node, then the maximum number of replacements based on MPS may occur along the sequence $\{p_1,....,p_{\Delta-1},p_{\Delta}\}$ resulting in $(\Delta-1)$ replacements. 
\end{proof}

\begin{definition}
A biconnected graph is a connected graph that does not have any critical nodes.
\label{def: biconnected}
\end{definition}

\begin{proposition}
In biconnected graphs, MPS cannot achieve optimal solution for connectivity maintenance. 
\label{prop: cyclegraph_perf}
\end{proposition}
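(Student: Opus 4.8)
The plan is to exploit the defining property of a biconnected graph — the total absence of critical nodes — to pin down the optimal cost exactly, and then to show that MPS is structurally forced to exceed it on every (nontrivial) such graph.

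First I would fix the optimal cost of the RC problem on a biconnected graph. By Definition~\ref{def: biconnected}, a biconnected $\mathcal{G}$ has no critical nodes, so in the sense of Definition~\ref{def: criticality} \emph{every} node of $\mathcal{G}$ is noncritical. Hence for an arbitrarily removed node $p_0$, the graph $\mathcal{G} - (p_0, E_0)$ is already connected. Since the RC problem seeks connectivity with the fewest replacements, and connectivity holds after removing $p_0$ with no replacement whatsoever, the optimal number of replacements is exactly zero for the removal of any node.

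Next I would show that MPS never attains this value. The key point is that MPS's only early-exit test is the leaf check $|\mathcal{N}_{p_0}| = 1$ in Algorithm~1; it has no test that detects a noncritical-but-nonleaf node, which is precisely what every node of a biconnected graph is. For a biconnected graph on at least three nodes no leaf can exist, since a leaf $v$ with unique neighbor $u$ would make $u$ critical (removing $u$ severs $v$ from the rest), contradicting biconnectedness. Thus every node has degree at least two, the guard $|\mathcal{N}_{p_0}| = 1$ fails, and MPS enters the \textbf{else} branch. At the first evaluation of the while-loop guard we have $p_i = p_0$ and $message = (p_0)$, so $\mathcal{N}_{p_0} \setminus \{p_0\} = \mathcal{N}_{p_0} \neq \emptyset$; the loop body therefore executes at least once, forcing at least one replacement of $p_0$ by some neighbor $p_1$. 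Combining the two steps gives the claim: MPS performs at least one replacement while the optimum is zero, so it is strictly suboptimal on every biconnected graph.

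The only real care point — the ``obstacle,'' such as it is — lies in the second step's reliance on the nonexistence of leaves, which fails for the degenerate two-node graph $K_2$: there both nodes are leaves, MPS takes the if-branch, and it correctly performs zero replacements, matching the optimum. I would therefore either invoke the standard convention that a biconnected graph has at least three nodes (equivalently, minimum degree at least two), or explicitly exclude $K_2$, so that ``no leaf exists'' holds and the argument goes through uniformly. With that qualification the statement is clean, and the essential mechanism is simply that MPS cannot recognize a noncritical starting node unless it happens to be a leaf.
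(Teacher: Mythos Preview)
Your proof is correct and follows essentially the same line as the paper: in a biconnected graph every node is noncritical, so the optimum is zero replacements, yet MPS (lacking any non-leaf noncriticality test) necessarily enters the while loop and performs at least one replacement. Your argument is in fact more careful than the paper's on two points: the paper asserts that MPS performs exactly $|V|-1$ replacements, which is only the upper bound from Proposition~\ref{prop: anygraph_perf} and need not be attained (e.g., in $K_{2,3}$ a maximal simple path can terminate after visiting only four of the five nodes), whereas your ``at least one'' bound is all that is needed and is always correct; and you explicitly flag the degenerate $K_2$ case, which the paper leaves implicit.
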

\begin{proof}
In a biconnected graph, $\mathcal{G}=(V,E)$, each node is noncritical. However, based on MPS, any arbitrary node, $v \in V$, that will be removed from $\mathcal{G}$ always initiates the replacements. In this manner, the graph is reconfigured with $|V|-1$ node replacements even though the removal of $v$ does not cause any disconnection in $\mathcal{G}$.
\end{proof}

Note that MPS may not always result in the minimum number of replacements in agent removal. For instance, if the removed agent is not a leaf node, but noncritical, MPS still initiates the sequence of replacements as depicted in Proposition~\ref{prop: cyclegraph_perf}. From a centralized perspective, understanding the criticality of a node is feasible. However, the node criticality may not be determined locally. As shown in Figure~\ref{fig: inf_graphs}, let $\mathcal{G}_1$ and $\mathcal{G}_2$ be an infinite cycle and infinite path graphs, respectively. Suppose that any node in $\mathcal{G}$ knows its $\delta$-hop neighborhood. Let $v_0$ be the removed node. As seen from Figure~\ref{fig: inf_graphs}, $v_0$ is noncritical in $\mathcal{G}_1$, but critical in $\mathcal{G}_2$. Note that for any finite $\delta$, $v_0$ has the same neighborhood in $\mathcal{G}_1$ and $\mathcal{G}_2$, hence it can not differentiate its criticality by just looking at its $\delta$-neighborhood. 

%

\begin{figure}[h!]
\begin{center}
\resizebox*{\columnwidth}{!}{\includegraphics{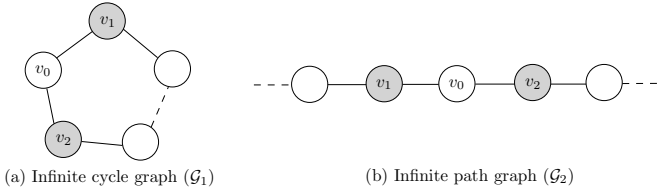}}%
\caption{Examples to graphs with infinite nodes.}%
\label{fig: inf_graphs}
\end{center}
\end{figure}

Since the criticality cannot be always determined locally, MPS may sometimes initiate a sequence of replacements when a noncritical node is removed. Such an optimality gap is due to the limitation of local information in the computations. However, it is important to emphasize that, for any undirected connected graph, connectivity maintenance in the presence of any node removal is guaranteed by MPS by using only some local information. As depicted, MPS always stops at a noncritical node. Thus, the graph reconfigures itself as if a noncritical node is removed from the system instead of an arbitrary node, $p_0$. In this respect, $p_{i}$ selects the node, $p_{i+1}$, which will replace itself, from $\mathcal{N}_{p_i} \setminus \{p_0,....,p_i\}$. Here, a question arises as which node from $\mathcal{N}_{p_i} \setminus \{p_0,....,p_i\}$ should be selected to increase the efficiency of MPS. For instance, a random selection scheme requires very little information to be shared among nodes, or a node selection based on the minimum degree, \cite{aksaray2013}, may capture the leaf node neighbors. Consequently, as the information possessed by a node and shared in the neighborhood increases, the solution approaches the optimal solution.

\section{$\delta$-criticality MPS} \label{deltaMPS}
In this section, we introduce a variant of MPS, which uses $\delta$-hop criticality information for each node. Here, the $\delta$-hop criticality is defined as Definition~\ref{def: delta-hopCriticality}.

\begin{definition} ($\delta$-hop criticality)
A node, $v_i$, is $\delta$-hop critical if the subgraph, induced by the $\delta$-neighborhood of $v_i$, is disconnected by the removal of $v_i$; otherwise, $v_i$ is $\delta$-hop noncritical.
\label{def: delta-hopCriticality}
\end{definition}

\begin{remark}
Let $\mathcal{G}=(V,E)$ be a connected graph, and let $v_i \in V$ be $\delta$-hop noncritical. Then, $v_i$ is noncritical in $\mathcal{G}$. Suppose that a simple path, $p^{nm}$, connects any arbitrary two nodes $v_n, v_m \in V$ and includes $v_i$ as an intermediate node. In $p^{nm}$, $v_i$ appears between two of its neighbors. In the removal of $v_i$, there exist another path, $p^i$, consisting of some nodes within $\delta$ hops of $v_i$ since $v_i$ is $\delta$-hop noncritical by definition. Hence, the removal of $v_i$ does not cause a disconnection between $v_n$ and $v_m$ because $v_i$ can be replaced by $p^i$. Consequently, a $\delta$-hop noncritical node is always a noncritical node in $\mathcal{G}$.  
\label{rem: delta-hop-NonCritical}
\end{remark}

In light of Remark~\ref{rem: delta-hop-NonCritical}, $\delta$-hop criticality is used in MPS as in Algorithm~2. In this respect, each node knows whether itself and immediate neighbors are $\delta$-hop critical.

\begin{center}
\resizebox{\columnwidth}{!}{
\begin{tabular}{l l}
\hline
\bf{Agorithm 2:} \bf$\delta$-criticality MPS\\
\hline
$Input :$ An arbitrary node, $p_0$, from $\mathcal{G}$ \\
$Output :$ Connectivity maintenance in the removal of $p_0$ \\
$Assumption:$ Each node shares both its unique node ID and $\delta$-criticality with its neighbors.\\
\mbox{\small $\;1:\;$}\textbf{initialization:} $p_i \gets p_0; \quad \quad \mathcal{N}_{p_i} \gets \mathcal{N}_{p_0}; \quad$ $message \gets (p_0)$;\\
\mbox{\small $\;2:\;$}\textbf{if}\hspace{0.1cm}  $| \mathcal{N}_{p_0} |=1$ \\
\mbox{\small $\;3:\;$}\hspace{0.45cm} no replacements required; \\
\mbox{\small $\;4:\;$}\textbf{else}\hspace{0.1cm}  $ $ \\
\mbox{\small $\;5:\;$}\hspace{0.45cm} \textbf{while} \hspace{0.1cm} $\mathcal{N}_{p_i} \setminus message \neq \emptyset$\\
\mbox{\small $\;7:\;$}\hspace{0.90cm} \textbf{if} \hspace{0.1cm} any $v \in \mathcal{N}_{p_i} \setminus message$ is $\delta$-noncritical;\\
\mbox{\small $\;8:\;$}\hspace{1.35cm} $p_{i+1} \gets v$ s.t. $v$ is one of the $\delta$-noncritical nodes; \\
\mbox{\small $\;9:\;$}\hspace{0.90cm} \textbf{else} \\
\mbox{\small $\;10:\;$}\hspace{1.35cm} $p_{i+1} \gets v$ s.t. $v$ is randomly selected from  $\mathcal{N}_{p_i} \setminus message$; \\
\mbox{\small $\;11:\;$}\hspace{0.71cm} \textbf{end if} \\
\mbox{\small $\;12:\;$}\hspace{0.71cm} $p_i$ sends $message$ to $p_{i+1}$;\\
\mbox{\small $\;13:\;$}\hspace{0.71cm} $p_i$ replaces the second last node in the $message$; \\
\mbox{\small $\;14:\;$}\hspace{0.71cm} $p_i \gets p_{i+1}$; \quad $\mathcal{N}_{p_i} \gets \mathcal{N}_{p_{i+1}}$; \\ 	
\mbox{\small $\;6:\;$}\hspace{0.90cm} $message \gets (message , p_i)$ ;\\
\mbox{\small $\;15:\;$}\hspace{0.45cm} \textbf{end while} \\
\mbox{\small $\;16:\;$}\textbf{end if}\hspace{0.1cm}  $ $ \\
\hline
\end{tabular}
}
\end{center}
\vskip1ex
In $\delta$-criticality MPS, whenever a node, $p_i$, receives a message, it adds its own individual ID likewise MPS. Then, it selects a neighbor from the candidate set, $\mathcal{N}_{p_i} \setminus \{p_0,...,p_i\}$, based on $\delta$-criticality. In the case, where the candidate set does not contain a $\delta$-hop noncritical node, $p_i$ selects a random node from the candidate set. 



It has been shown in Remark~\ref{rem: delta-hop-NonCritical} that a $\delta$-noncritical node is globally noncritical in $\mathcal{G}$. Now, a question arises as when a $\delta$-critical node assures global criticality. In this respect, Proposition~\ref{thm: deltaCritical} presents a sufficient condition that guarantees global node criticality by relating $\delta$ to a graph structure.

\begin{definition}
A chordless cycle in $\mathcal{G}$ is a cycle such that no two nodes of the cycle are connected by an edge that does not itself belong to the cycle.
\label{def: chordlessCycle}
\end{definition}

\begin{proposition}
Let $c_{max}$ be the length of the longest chordless cycle in $\mathcal{G}$. If $\delta \geq \frac{c_{max}}{2}$, then a $\delta$-critical node is globally critical in $\mathcal{G}$.
\label{thm: deltaCritical}
\end{proposition}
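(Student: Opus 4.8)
The plan is to prove the contrapositive: if $v_i$ is globally noncritical, then it is $\delta$-hop noncritical. Recall from Definition~\ref{def: delta-hopCriticality} that $\delta$-hop noncriticality requires the subgraph $H$ induced by the $\delta$-neighborhood of $v_i$ to remain connected after $v_i$ is deleted. First I would reduce this to a statement about the neighbors of $v_i$: any node $z$ in $H$ at distance $d\le\delta$ from $v_i$ has a shortest $z$--$v_i$ path whose vertices all lie within distance $d\le\delta$ of $v_i$ (hence in $H$), and whose penultimate vertex is a neighbor of $v_i$ reachable from $z$ inside $H-v_i$. Therefore $H-v_i$ is connected exactly when all neighbors of $v_i$ lie in one component of $H-v_i$, and it suffices to connect, within the $\delta$-ball and avoiding $v_i$, every pair of neighbors $a,b\in\mathcal{N}_{v_i}$.

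Fix such a pair. Since $v_i$ is globally noncritical, $\mathcal{G}-v_i$ is connected, so there is a shortest $a$--$b$ path $R=(a=y_0,y_1,\ldots,y_t=b)$ in $\mathcal{G}-v_i$; being a geodesic, $R$ is an induced path. I would argue by strong induction on $t$. Form the cycle $C=(v_i,y_0,\ldots,y_t,v_i)$ of length $t+2$. If $C$ is chordless, then $t+2\le c_{max}$, and for any $y_j$ the distance along $C$ from $v_i$ is at most $\lfloor (t+2)/2\rfloor\le\lfloor c_{max}/2\rfloor\le\delta$; since graph distance never exceeds cycle distance, every $y_j$ lies in the $\delta$-ball, so $R$ itself connects $a$ and $b$ inside the ball while avoiding $v_i$. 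If instead $C$ has a chord, then because $R$ is induced that chord must join $v_i$ to some interior vertex $y_j$ with $0<j<t$; hence $y_j\in\mathcal{N}_{v_i}$, and the subpaths $(y_0,\ldots,y_j)$ and $(y_j,\ldots,y_t)$ are shortest paths of lengths $j<t$ and $t-j<t$. The induction hypothesis connects $a$ to $y_j$ and $y_j$ to $b$ within the ball while avoiding $v_i$, and concatenating gives the desired $a$--$b$ connection.

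Running this induction over all neighbor pairs shows $v_i$ is $\delta$-hop noncritical, which is the contrapositive of the claim and hence establishes that $\delta\ge c_{max}/2$ forces every $\delta$-critical node to be globally critical. The step requiring the most care — and the only place the hypothesis $\delta\ge c_{max}/2$ is used — is the chordless case: I must be sure chordlessness of $C$ caps its length by $c_{max}$ and that the ``halfway around a cycle'' estimate $\lfloor c_{max}/2\rfloor\le\delta$ keeps the entire path inside the $\delta$-ball. The main obstacle is handling chords incident to $v_i$; choosing $R$ to be a geodesic is precisely what guarantees no chords among the $y_j$ can occur, so the cycle can always be split at a neighbor of $v_i$ and the induction on path length terminates.
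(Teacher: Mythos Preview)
Your proof is correct and shares the paper's core strategy: argue the contrapositive by showing that if $v_i$ is globally noncritical then any two of its neighbors can be joined inside the $\delta$-ball by a path avoiding $v_i$, using the chordless-cycle length bound $c_{max}$ to keep that path within distance $\lfloor c_{max}/2\rfloor\le\delta$ of $v_i$.

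The paper's argument is terser and makes a leap that you handle explicitly. It asserts that one can always choose a shortest $u$--$w$ path in $\mathcal{G}-v$ none of whose internal vertices is adjacent to $v$, so that the cycle $(u,p^*,w,v,u)$ is immediately chordless; this claim is not justified there. Your strong induction on the geodesic length $t$ is exactly what fills this gap: when the cycle has a chord it must be a $v_i y_j$ edge (since the geodesic $R$ is induced), and splitting at $y_j$ reduces to two strictly shorter instances between neighbors of $v_i$. You also make the reduction from arbitrary $\delta$-ball vertices to immediate-neighbor pairs explicit, whereas the paper picks $u,w\in\mathcal{N}^\delta$ and then tacitly needs $uv,wv\in E$ for the cycle to close. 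In short, same idea, but your version is the rigorous one.
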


\begin{proof}
Let $v$ be a noncritical node in $\mathcal{G}$, and let $\mathcal{N}^{\delta}$ be the $\delta$-neighborhood of $v$ for some $\delta \geq \frac{c_{max}}{2}$, where $c_{max}$ is the length of the longest chordless cycle in $\mathcal{G}$. Suppose that $v$ is a $\delta$-critical node, then the graph, $\mathcal{G}'$, induced by the nodes in $\mathcal{N}^{\delta}$ is disconnected. Now, since $v$ is noncritical, there exist a shortest path between the nodes $(u,w)\in \mathcal{N}^{\delta}$, which are not connected in $\mathcal{G}'$ but connected in $\mathcal{G} - v$. Moreover, there always exist a shortest path, $(u,p^*,w)$, where no elements on $p^*$ is connected to $v$ (in other words, no elements on $p^*$ is in $\mathcal{N}^{\delta}$). Note that $(u,p^*,w,v,u)$ is a chordless cycle and its length, $c$, cannot be larger than $c_{max}$. However, $v$ does not know the existence of such a path, so $c > 2\delta$, which is a contradiction because $2\delta \geq c_{max} \geq c$.
\end{proof}

\begin{corollary}
If $\delta \geq \frac{c_{max}}{2}$, then the replacement sequence generated via $\delta$-MPS involves only one noncritical node, which is the last node on the replacement sequence.
\label{cor: delta_1noncriticalnode}
\end{corollary}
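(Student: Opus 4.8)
The plan is to first upgrade the two one-directional facts about $\delta$-criticality into a genuine equivalence, and then use that equivalence to control exactly where the $\delta$-MPS replacement sequence can terminate. The key observation is that under the hypothesis $\delta \geq \frac{c_{max}}{2}$, local $\delta$-criticality coincides with global criticality: Remark~\ref{rem: delta-hop-NonCritical} already gives that every $\delta$-hop noncritical node is globally noncritical, while the contrapositive of Proposition~\ref{thm: deltaCritical} (a $\delta$-critical node is globally critical) supplies the converse, namely that every globally noncritical node is $\delta$-hop noncritical. Combining the two, a node is $\delta$-hop noncritical if and only if it is noncritical in $\mathcal{G}$, so each node can locally and correctly decide its own and its neighbors' global criticality.

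Next I would recall, via Proposition~\ref{prop: MPS_max_simple_path} and Theorem~\ref{thm: mainThm}, that the ordered set of nodes produced by the message passing is either a single leaf node or a maximal simple path $(p_0,p_1,\ldots,p_k)$ whose terminal node $p_k$ is noncritical. Hence the existence of at least one noncritical node (the last one) is immediate; the real content of the corollary is that none of the earlier nodes $p_0,\ldots,p_{k-1}$ is noncritical.

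To establish this I would argue that the $\delta$-MPS sequence halts at the first node it encounters that is $\delta$-hop noncritical. By the equivalence from the first step, such a node is globally noncritical, so relocating onto it already maintains connectivity and no further replacement is needed; conversely, the selection rule in Algorithm~2 steers the message toward a $\delta$-noncritical neighbor as soon as one appears in the candidate set $\mathcal{N}_{p_i}\setminus\{p_0,\ldots,p_i\}$. Suppose, for contradiction, that some intermediate node $p_j$ with $j \leq k-1$ were noncritical. Then $p_j$ is $\delta$-hop noncritical, so at the step that reaches $p_j$ the algorithm has found a $\delta$-noncritical node and terminates the sequence there, forcing $j=k$ --- a contradiction. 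Therefore every node preceding $p_k$ is critical, and $p_k$ is the unique noncritical node on the sequence.

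I expect the main obstacle to be precisely this termination argument, i.e. making rigorous that the sequence genuinely stops at the first $\delta$-noncritical node rather than passing through it. This needs both halves of the equivalence established at the outset: the forward half so that stopping there is \emph{safe} (connectivity is maintained by Remark~\ref{rem: delta-hop-NonCritical}), and the reverse half so that stopping there is \emph{detectable} from purely local information (a globally noncritical node is recognized as $\delta$-noncritical). I would also treat the boundary case in which $p_0$ itself is noncritical separately: it is then $\delta$-noncritical, the sequence collapses to the single node $(p_0)$, and the claim holds trivially --- which is exactly the situation (e.g. biconnected graphs) where plain MPS was shown to be suboptimal, so this case also clarifies why the $\delta$-criticality refinement improves optimality.
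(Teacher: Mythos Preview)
Your approach mirrors the paper's: both hinge on Proposition~\ref{thm: deltaCritical} to upgrade $\delta$-criticality to global criticality, then argue that the message traverses only (globally) critical nodes until it reaches a $\delta$-noncritical one. Your write-up is more careful --- you make the biconditional explicit by pairing Proposition~\ref{thm: deltaCritical} with Remark~\ref{rem: delta-hop-NonCritical}, and you invoke Proposition~\ref{prop: MPS_max_simple_path} and Theorem~\ref{thm: mainThm} for the structure of the path --- but the core idea is identical to the paper's two-line argument.

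There is, however, a genuine gap in your termination step (one the paper's proof also glosses over, but you have made it explicit and then not closed it). You claim that $\delta$-MPS \emph{halts} at the first $\delta$-noncritical node it meets, and that if $p_0$ itself is noncritical the sequence collapses to $(p_0)$. Neither is supported by Algorithm~2 as written: its only stopping conditions are $|\mathcal{N}_{p_0}|=1$ and $\mathcal{N}_{p_i}\setminus message=\emptyset$. The selection rule \emph{prefers} a $\delta$-noncritical neighbor when one exists in the candidate set, but nothing in the while loop tests whether the \emph{current} node $p_i$ is $\delta$-noncritical and stops there. Hence your contradiction --- ``if $p_j$ were noncritical the algorithm would terminate there, forcing $j=k$'' --- does not go through against the literal pseudocode: a $\delta$-noncritical $p_j$ with $\mathcal{N}_{p_j}\setminus\{p_0,\ldots,p_j\}\neq\emptyset$ would simply forward the message onward. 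The paper's proof implicitly reads Algorithm~2 as stopping once a $\delta$-noncritical node is reached (``a message travels \ldots until finding a $\delta$-noncritical node''), and under that intended reading both arguments are fine; but since you flag this as ``the main obstacle,'' you should either state that additional stopping rule explicitly as an assumption, or note that the corollary is about the intended behavior rather than the verbatim pseudocode.
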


\begin{proof}
Based on Algorithm~2, a message travels from a $\delta$-critical node to a neighboring $\delta$-critical node until finding a $\delta$-noncritical node. In the case of $\delta \geq \frac{c_{max}}{2}$, Proposition~\ref{thm: deltaCritical} shows that a $\delta$-critical node is globally critical. Hence, the replacement sequence generated via $\delta$-MPS contains only one noncritical node, which is the last node on the sequence. 
\end{proof}

\begin{remark}
Suppose that there is no chordless cycle in $\mathcal{G}$. A $\delta$-critical node for any $\delta \geq 1$ is globally critical in $\mathcal{G}$ because $\mathcal{G}$ is a tree graph where each noncritical node is a leaf node. 
\label{rem: diam-1}
\end{remark}

\begin{remark}
If $\delta$ is selected properly based on the graph topology (based on Proposition~\ref{thm: deltaCritical} and Remark~\ref{rem: diam-1}), a $\delta$-critical node is always a critical node. Hence, a resulting replacement sequence does not contain any redundant replacements since only the final node of the sequence is non-critical.

\end{remark}

As it is seen, if $\delta$ is selected properly such that a $\delta$-critical node is globally critical, then $\delta$-MPS does not cause any unnecessary agent replacements to maintain connectivity, and the resulting sequence approaches the optimal solution. Note that avoiding any unnecessary replacements is crucial for networked systems with limited power capacity. In some graph structures, there might not necessarily exist a unique replacement sequence. As such, if $\delta$-noncritical nodes are beyond the immediate neighborhood of a node $v$, then $v$ selects a $\delta$-critical neighbor randomly for its replacement. Due to the randomized nature of selecting replacing agent, $\delta$-MPS may not always guarantee the shortest path to a noncritical node. 

In Figure~\ref{fig: random_dMPS}, a line graph involving $7$ nodes is presented. In this example, let $\delta=1$, then $v_1,v_0,v_3,v_4,v_5$ are 1-critical nodes whereas $v_2$ and $v_6$ are noncritical nodes. Since the graph does not contain a chordless circle, from Remark~\ref{rem: diam-1}, a 1-critical node is globally critical. Here, we illustrate that $\delta$-MPS may result in a sequence of necessary replacements, but a longer route, for connectivity maintenance. In this manner, assume that $v_0$ is removed from the graph. From the perspective of $v_0$, selecting $v_1$ or $v_3$ is indifferent because $v_0$ can only see $\mathcal{N}^{\delta=1}$, which contains the highlighted nodes. Hence, $\delta$-MPS results in either $(v_0,v_1,v_2)$ or $(v_0,v_3,v_4,v_5,v_6)$ as a replacement sequence.  

\begin{figure}[h!]
\begin{center}
\resizebox*{\columnwidth}{!}{\includegraphics{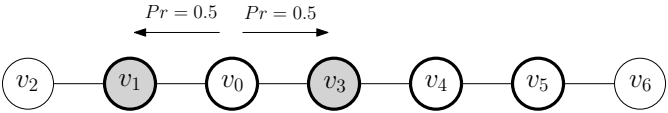}}%
\caption{In the case of $v_0$ is removed, $1$-MPS generates either $(v_0,v_1,v_2)$ or $(v_0,v_3,v_4,v_5,v_6)$ as a replacement sequence.}
\label{fig: random_dMPS}
\end{center}
\end{figure}   

Consequently, the optimality of $\delta$-MPS depends on the lengths of the shortest paths from a removed node to one of the noncritical nodes. If the lengths does not significantly vary from each other, then the solution of $\delta$-MPS is close to the optimal one.

\section{Simulation Studies} \label{analysis}
In order to elaborate on the performance of the message passing strategy, the Monte Carlo simulations are conducted to understand how close MPS is to the centralized solution. For the simulations, we use the MATLAB simulation environment and consider the canonical scenario depicted in the following section.

\subsection{Canonical Scenario} 
Consider an indoor reconnaissance and surveillance mission, where a set of robots (i.e, small unmanned vehicles) gather data from critical points and share the gathered data to increase their situational awareness. In such a mission, an undirected connected graph, $\mathcal{G}$, associates with the critical viewpoints of the environment (i.e. the viewpoints are the feasible points that a robot can be located on) as in Figure~\ref{fig: scenario}, where the nodes are the viewpoints and the dashed edges represent the communication links if two robots are located on the corresponding nodes. 

\begin{figure}[h!]
\begin{center}
\resizebox*{6cm}{!}{\includegraphics{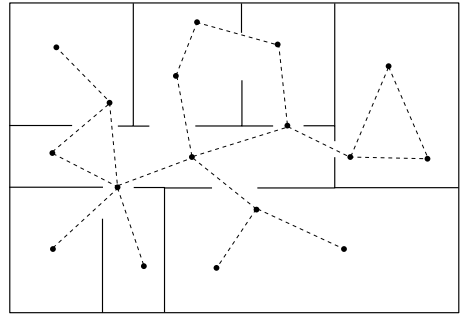}}%
\caption{An indoor environment associated with an undirected connected graph, whose nodes represent the critical viewpoints and edges represent the possible communication capability if agents are located on the corresponding nodes.}%
\label{fig: scenario}
\end{center}
\end{figure}

In this scenario, let each robot have limited energy capacity and different energy consumption. Then, it is likely to observe that the robots have variable energy levels. In this respect, a robot leaves the group when it reaches an energy threshold. Here, the removal of an agent may cause a disconnection in the communication network. For example, let 13 agent be assigned to points such that a connected communication network is obtained as in (a) of Figure~\ref{fig: scenario_combined}. Then the removal of $v_4$ causes a disconnection in the communication network as in (b) of Figure~\ref{fig: scenario_combined}. Consequently, the objective in this scenario is to maintain a connected communication network among the remaining agents with minimum agent replacements. 

\begin{figure}[h!]
\begin{center}
\resizebox*{\columnwidth}{!}{\includegraphics{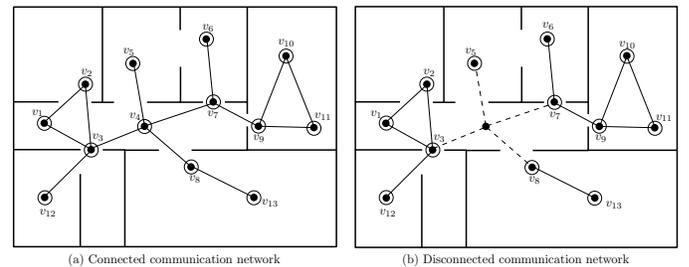}}%
\caption{(a) Thirteen agents assigned to the feasible tasks and having a connected communication network. (b) The removal of $v_4$ causes four partitioning in the communication network.}
\label{fig: scenario_combined}
\end{center}
\end{figure}

\subsection{Monte Carlo Simulations}
Based on the canonical scenario, the objective of the Monte Carlo simulations is to provide a statistical understanding for the optimality of MPS and $\delta$-criticality MPS with respect to the centralized solution. In all simulations, we consider $20$ assignments, which associate with a randomly generated undirected connected graph. Initially, we assume $20$ agents individually assigned to each node. At an instant, a randomly selected arbitrary agent is removed from the network. To maintain the connectivity, we solve the RC problem via centralized controller, MPS, and $\delta$-criticality MPS.      

The results of the Monte Carlo simulations show that the optimality of MPS significantly improves as $\delta$-criticality information is incorporated to the decision scheme for the replacements. For example, as illustrated in Table~\ref{tab: results}, an increase in $\delta$ leads to the total number of replacements induced from $\delta$-criticality MPS to approach the number of replacements resulted from the centralized solution. Note that the huge cost difference between MPS and $\delta$-criticality MPS is due to the fact that MPS is using a randomized decision mechanism for the replacements while $\delta$-criticality MPS incorporates $\delta$-criticality of a node in the decision mechanism. Furthermore, the optimality of MPS also varies with respect to the graph topology. For example, in the case of the graph diameter decreases and the average node degree increases, the performance of MPS degrades while the performance of $\delta$-criticality MPS improves greatly. 

\begin{table}[h!]
	\center
  \caption{Based on 500 simulations, mean cost$^\ast$ of various strategies for connectivity maintenance in graphs with 20 nodes}
\resizebox{\columnwidth}{!}{  
{\begin{tabular}{@{}cccccc}\hline
   	   
   	Mean diameter & Mean avg. node degree 
   		& centralized & MPS 
        & 1-criticality MPS
        & 2-criticality MPS\\
\hline
 	$9.2$ & $3.1$ & $0.498$ & $4.960$ & $0.626$ & $0.556$ \\
 	$6.7$ & $4.3$ & $0.186$ & $7.224$ & $0.302$ & $0.240$ \\
 	$4.0$ & $7.3$ & $0.028$ & $12.574$ & $0.068$ & $0.030$ \\
   \hline
   $^\ast${\tiny number of agent replacements}
  \end{tabular}}
}  
  \label{tab: results}
\end{table}

\begin{figure}[h!]
\begin{center}
\resizebox*{\columnwidth}{!}{\includegraphics{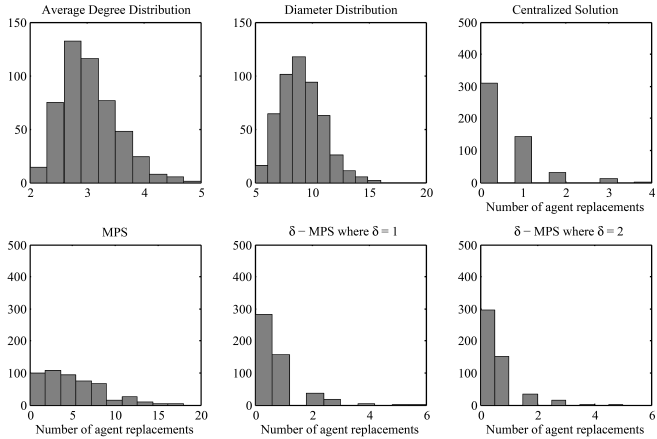}}%
\caption{Distributions pertaining to the graph properties and replacement solutions. As $\delta$- criticality is utilized, the number of replacements for preserving connectivity approaches to the number induced by a centralized controller.}
\label{fig: results1}
\end{center}
\end{figure}

\begin{figure}[h!]
\begin{center}
\resizebox*{\columnwidth}{!}{\includegraphics{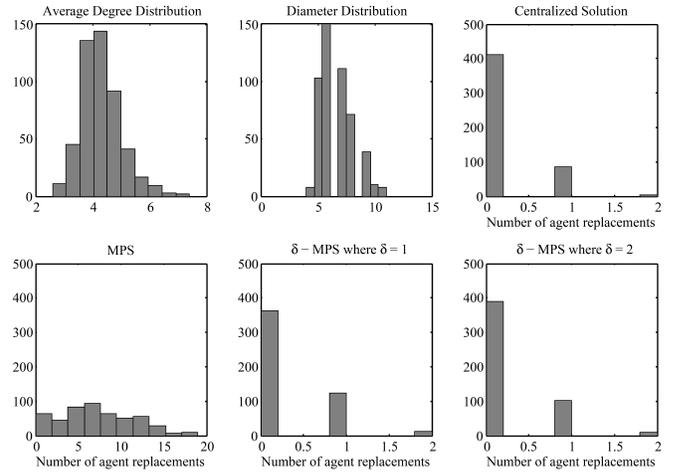}}%
\caption{Distributions pertaining to the graph properties and replacement solutions. As the graph diameter decreases and the average node degree increases, the replacement sequence driven by MPS diverges from the centralized solution.}
\label{fig: results2}
\end{center}
\end{figure}

\begin{figure}[h!]
\begin{center}
\resizebox*{\columnwidth}{!}{\includegraphics{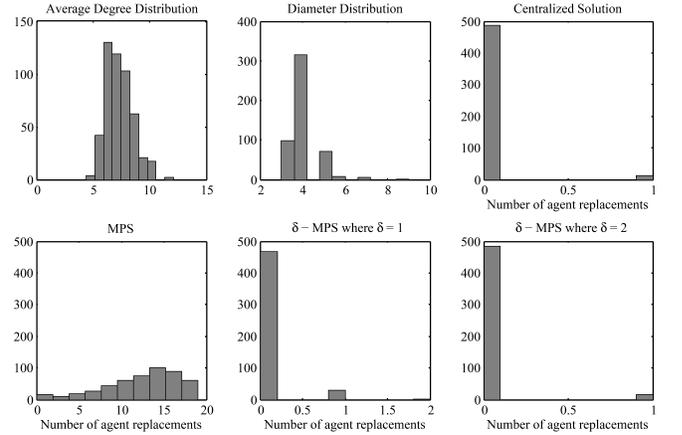}}%
\caption{Distributions pertaining to the graph properties and replacement solutions. As the graph diameter decreases and the average node degree increases, the performance of $\delta$-criticality MPS improves greatly and converges to the performance of centralized controller.}
\label{fig: results3}
\end{center}
\end{figure}

\section{Conclusions} \label{conclusion}
In this study, the connectivity issue of networked systems in the presence of agent removal has been discussed, and a decentralized connectivity maintenance strategy, which is applicable to any scale of network, has been proposed. We showed that the message passing strategy (MPS) proposed in this paper maintains the graph connectivity for any initially connected network until the removal of all agents. This is achieved by a sequence of replacements initiated by the removed agent. The benefits of the proposed control scheme are guaranteeing the connectivity maintenance by using only some local information and not increasing the total number of edges and the maximum node degree of a network as the agents are removed. 

The optimality gap of the proposed strategy has been discussed through Monte Carlo simulations by comparing the performance of the proposed decentralized strategy with respect to the centralized solution, which results in the minimum number of replacements. While the message passing strategy maintains the graph connectivity even in the case of replacements by randomly selected agents, it has been observed that incorporating $\delta$-criticality information to the decision mechanism significantly improves the resulting performance. As such, a variant of MPS has been introduced as $\delta$-criticality MPS, which demonstrates a significant performance improvement. 

Some interesting aspects requiring further investigation include the development of a strategy for simultaneous agent removals, a throughout study for a more general communication problem involving the delays and uncertainty, and the introduction of a more general mission, in which some of the removed agents return to the mission area.

\ifCLASSOPTIONcaptionsoff
  \newpage
\fi

\bibliography{/PhD_Research/References/ref2}
%

%
%
%




\end{document}